\documentclass[11pt]{article}  % Comment this line out if you need a4paper

\usepackage[margin=1in]{geometry}
\usepackage{amsmath} % assumes amsmath package installed
\usepackage{amssymb}  % assumes amsmath package installed
\usepackage{amsthm}
\usepackage{float}
\usepackage{graphicx}
\usepackage{gensymb}
\usepackage{comment}
\usepackage{subcaption}
\usepackage[hidelinks]{hyperref}
\newtheorem{theorem}{Theorem}
\newtheorem{assumptions}{Assumptions}
\title{\LARGE \bf
Decentralized Guidance and Control for Rendezvous and Docking with a Tumbling Target using Multiple Servicers
}

\author{
Jonathan Taylor Jönsson$^{1}$,
Sathyanarayanan Seshasayanan$^{1}$,\\
Sumeet Gajanan Satpute$^{1}$,
George~Nikolakopoulos$^{1}$\\
}

\date{}

\begin{document}

\maketitle

\begingroup
\renewcommand{\thefootnote}{}
\footnotetext{
$^{1}$Robotics and AI Team, Department of Computer Science,
Electrical and Space Engineering, Luleå University of Technology, Sweden.
{\tt\small Corresponding author's email: jonathanjonsson96@hotmail.se}
}
\addtocounter{footnote}{-1}
\endgroup

%%%%%%%%%%%%%%%%%%%%%%%%%%%%%%%%%%%%%%%%%%%%%%%%%%%%%%%%%%%%%%%%%%%%%%%%%%%%%%%%
\begin{abstract}
The growing trend of miniaturization in space, combined with an exponential increase in orbital debris, motivates a new operational paradigm, that is the deployment of multiple small servicers to collectively rendezvous, dock and service large tumbling targets. A single small spacecraft is often limited in thrust or reach for such targets, yet coordinating multiple autonomous spacecraft safely in close proximity remains an open challenge. This paper presents a proof-of-concept decentralized 6-DOF model predictive control (MPC) framework for multiple servicers to safely rendezvous and dock (RVD) to a tumbling target with no knowledge about each others predicted input over the prediction horizon. The proposed framework enforces soft docking conditions, inter-servicer collision avoidance, and obstacle avoidance through embedded constraints, while remaining computationally tractable via a linear time-invariant (LTI) formulation. Validation in MATLAB with three servicers and fixed obstacle arrangements confirms that all constraints are satisfied and successful docking is achieved.

%Due to the increasing number of satellites in Earth's orbit, there is a growing need for methods to remove tumbling space debris, on-orbit servicing, and recovery of malfunctioning satellites. This can be solved by using multiple autonomous servicer satellites that can rendezvous and dock (RVD) with precision. This paper presents a proof-of-concept decentralized 6-DOF model predictive control (MPC) framework for multiple servicers to RVD with a tumbling target. Constraints implemented into the model ensure soft-docking and collision avoidance. The model is validated in a MATLAB simulation environment with different fixed obstacle arrangements between the servicers and the target. The simulation results demonstrate that RVD successfully satisfies all the constraints imposed under limited information sharing.

\end{abstract}

%%%%%%%%%%%%%%%%%%%%%%%%%%%%%%%%%%%%%%%%%%%%%%%%%%%%%%%%%%%%%%%%%%%%%%%%%%%%%%%%
\section{Introduction}
The trend of miniaturization in space has led to an unprecedented surge in satellite deployments, with both governments and private companies launching increasingly capable small spacecraft at lower cost. At the same time, orbital debris has grown exponentially since the early days of spaceflight, with large defunct satellites and rocket bodies posing a particular problem, especially in Low Earth Orbits (LEO) \cite{goshu2025space}. Operations such as active debris removal, on-orbit servicing, and satellite recovery all share a common prerequisite, i.e.,  the ability to safely rendezvous and dock (RVD) with a tumbling uncooperative target. Combined with the growing availability of the small spacecraft further motivates the question of whether multiple such vehicles could be coordinated to collectively perform this task, distributing the efforts across a formation rather than relying on a single large spacecraft. Thus, the fundamental RVD capability forms one of the most important aspects to address. This paper addresses precisely that, the development and validation of the guidance and control framework needed to enable multiple small spacecraft to safely RVD with a tumbling target.       

The applications for un-manned servicers are not restricted to the issue of space debris. Servicers could also be used to refuel, upgrade, or repair current satellites in orbit \cite{flores2014review}. 

The RemoveDEBRIS mission was the first of its kind to demonstrate Active Debris Removal (ADR) capabilities in-orbit when it comes to rendezvous and capture by using nets, harpoons and drag sails in order to de-orbit debris \cite{aglietti2020removedebris}. Demonstrations of rendezvous and docking with a single spacecraft with cooperative targets include JAXA's satellite ETS-VII \cite{kawano2001result}, and DARPA'
s Orbital Express \cite{whelan2000darpa}. There has not yet been a flight demonstration in-orbit which can achieve rendezvous and docking using a single spacecraft with an uncooperative target due to the complexities of a time-varying docking surface in a tumbling state. 

Previous theoretical work that has been done involving multiple servicers is limited and tackles the problem using different control architectures and mostly nonlinear control methods. Zhang et al. \cite{zhang2025decentralized} describes the relative motion and a decentralized control pose strategy in the Lie group SE(3) framework, which allows for a global and coordinate-free approach as well as a practical approach to coupling the translational and rotational dynamics. Fourlas et al. \cite{fourlas2023vision} proposes a centralized control scheme using non-linear MPC (NMPC) in conjunction with a vision-based pose estimation framework. The same control architecture is adapted by Wang et al. \cite{wang2025manifold}, where they propose a manifold-based trajectory planning method that solves an optimal control problem under boundary values and uses rotated 3D-Dubins curves in order to satisfy convergence of the manifold-based method. In McCamish et al. \cite{mccamish2010autonomous}, a distributed control architecture is used with an LQR controller in conjunction with a artificial potential function (APF) for the robust collision avoidance capabilities for both fixed and moving obstacles. An attitude synchronization MPC framework is proposed by Stadler et al. \cite{stadler2026lightweight} which embeds a stabilizing feedback controller into the system dynamics that results in a linearly time-invariant (LTI) prediction model. 

Previous work surrounding RVD to a tumbling target using multiple servicers consists in large of nonlinear control approaches, leading to more computationally demanding methods. In order to reduce this demand, an LTI approach is investigated. This paper proposes a framework that is an extension of the dual-loop MPC work in \cite{stadler2026lightweight}. The main contributions are as follows:
\begin{itemize}
    \item A parameter-invariant and LTI 6-DOF MPC framework is proposed for the decentralized rendezvous and docking of multiple servicer spacecraft with a tumbling target without inter-servicer sharing of predicted inputs over the prediction horizon.
    \item A stabilizing control force variable for the translational part of the framework that consists of feed-forward and linear feedback terms.
    \item Finally, the framework is validated in a MATLAB simulation environment with three servicers and multiple fixed obstacle arrangements, demonstrating successful RVD while satisfying all imposed constraints. 
\end{itemize}
\section{Problem formulation and spacecraft dynamics}
This paper addresses the problem of autonomous RVD of multiple servicer spacecraft with a tumbling target spacecraft. The target is assumed to exhibit uncontrolled rotational motion, while a team of servicer spacecraft simultaneously approaches and docks at separate target docking points under a decentralized control architecture. 

Let the set of servicer spacecraft be defined as $S = \{1,...N_s\}$, where $N_s$ denotes the total number of servicers participating in the mission. The following assumptions are made throughout the formulation of the servicer spacecraft dynamics and the design of the control strategy:
\begin{assumptions} \label{assumption1}
Relative position and velocity information between servicers and the target are available through onboard navigation and estimation algorithms as presented in \cite{kim2007kalman, comellini2020vision}.
\end{assumptions}
\begin{assumptions} \label{assumptipon2}
The center of mass of the target spacecraft is assumed to follow a near-circular orbit.
\end{assumptions}
\begin{assumptions} \label{assumption3}
The relative distance between the servicer and the target is assumed to be much smaller than the distance from the Earth's center to the target.
\end{assumptions}
The reference frames used throughout this work are illustrated in Figure~\ref{fig:ref_system} and defined as follows: \begin{itemize} \item $\mathcal{F}_I=\{O_I,\hat{i}_I,\hat{j}_I,\hat{k}_I\}$: Earth-Centered Inertial (ECI) frame. \item $\mathcal{F}_O=\{O_O,\hat{i}_O,\hat{j}_O,\hat{k}_O\}$: Local-Vertical Local-Horizontal (LVLH) frame attached to the target spacecraft. \item $\mathcal{F}_{BT}=\{O_{BT},\hat{i}_{BT},\hat{j}_{BT},\hat{k}_{BT}\}$: Body-fixed frame of the target spacecraft. \item $\mathcal{F}_{BC}^i=\{O_{BC}^i,\hat{i}_{BC}^i,\hat{j}_{BC}^i,\hat{k}_{BC}^i\}$, $i\in S$: Body-fixed frame of the $i^{\mathrm{th}}$ servicer spacecraft. \end{itemize}
\begin{figure}[!hbtp]
    \centering
    \includegraphics[width=0.8\linewidth]{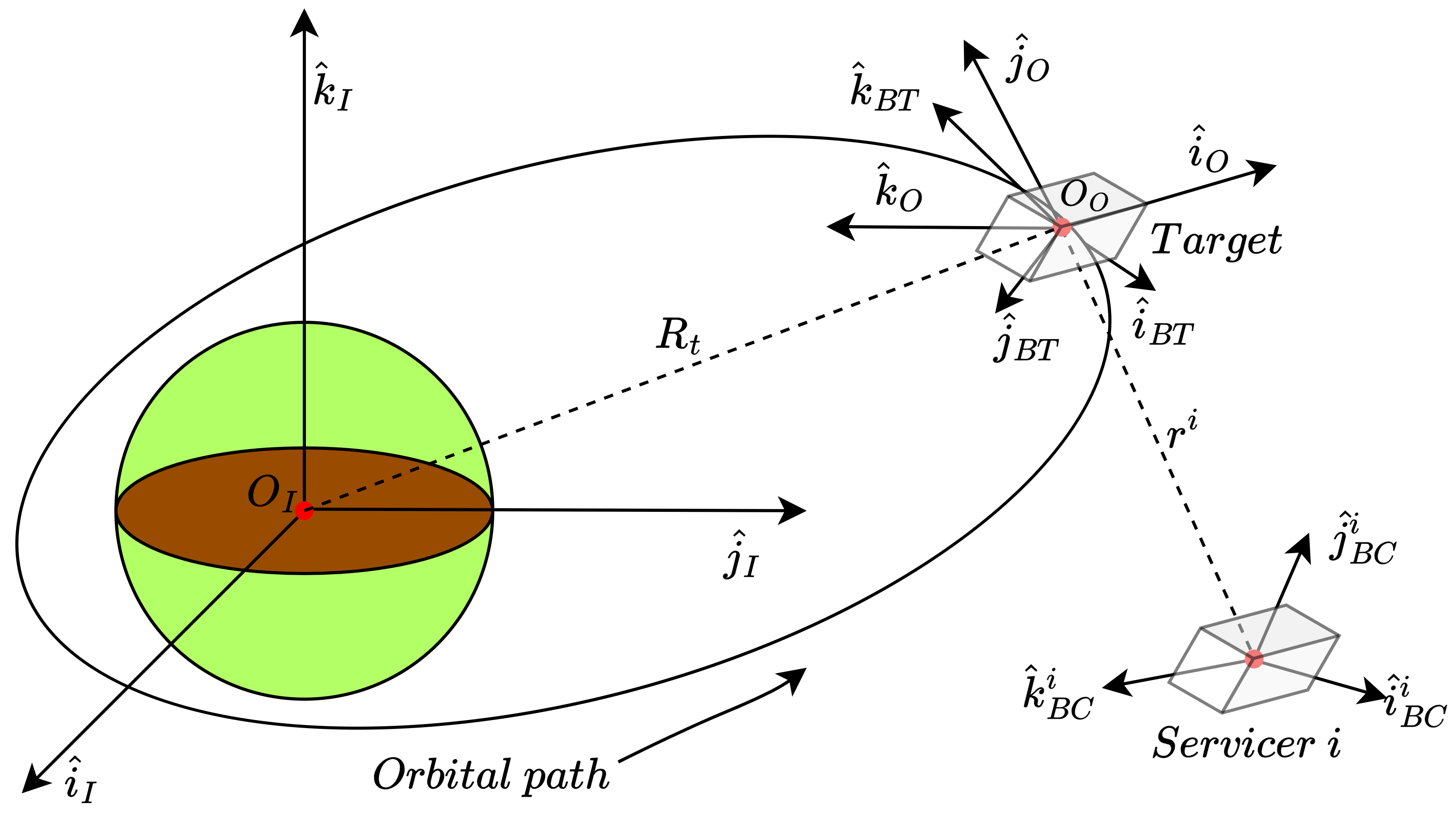}
    \caption{Illustration of the models reference frames}
    \label{fig:ref_system}
\end{figure}
\subsection{Relative motion dynamics}

The dynamics are derived under Assumptions~\ref{assumptipon2}--\ref{assumption3}. The relative translational motion of the $i^{\mathrm{th}}$ servicer is governed by the Clohessy-Wiltshire (CW) equations \cite{clohessy1960terminal}:

\begin{equation} \label{eq:linear}
\ddot{r}^i=A_v\dot{r}^i+A_r r^i+\frac{R_{OC}^if^i}{m},
\end{equation}

where

\begin{equation}
A_v=
\begin{bmatrix}
0&2n&0\\
-2n&0&0\\
0&0&0
\end{bmatrix},
\quad
A_r=
\begin{bmatrix}
3n^2&0&0\\
0&0&0\\
0&0&-n^2
\end{bmatrix}.
\end{equation}

Here, $r^i$ denotes the relative position of the $i^{\mathrm{th}}$ servicer in $\mathcal{F}_O$, $f^i= \begin{bmatrix} f^i_x,f^i_y ,f^i_z \end{bmatrix}^T$ is the control force vector expressed in $\mathcal{F}_{BC}^i$, $R_{OC}^i$ denotes the rotation matrix from $\mathcal{F}_{BC}^i \rightarrow \mathcal{F_{O}}$, $m$ is the servicer's mass, and $n=\sqrt{\mu/R_t^3}$ is the target orbital rate, where $\mu$ is the Earth's gravitational parameter.

\subsection{Attitude dynamics}
The rotational dynamics of the $i^{\mathrm{th}}$ servicer spacecraft are described by Euler's rigid-body equation
\begin{equation}  \label{eq:rotation}
\dot{\omega}^i = J^{-1}\left(\tau^i-\omega^i\times L\right), 
\end{equation}
where $\omega^i \in \mathbb{R}^3$ is the angular velocity vector expressed in $\mathcal{F}_{BC}^i$, $\tau^i=[\tau^i_x, \tau^i_y, \tau^i_z]^T$ is the control torque vector, $J$ is the inertia matrix of the servicer spacecraft, and $L$ is its angular momentum. It is assumed that reaction wheels or other momentum-exchanging devices keep the total angular momentum constant through the law of conservation. The translational and attitude dynamics presented above form the basis for the development of the distributed control strategy described in the following section.

The attitude of the $i^{\mathrm{th}}$ servicer evolves according to
\begin{equation}
    \dot{R}^i = R^i(w^i)^\times
\end{equation}
where $R^i$ denotes the attitude of the $i^{\mathrm{th}}$ servicer, and $(\cdot)^\times$ denotes the skew-symmetric operator.
\subsection{Control objective}
\label{control_objective}
The objective is to design a control law that enables each servicer spacecraft to rendezvous and dock to its assigned docking port on the tumbling target while avoiding collisions with the target and other servicers. Furthermore, the position, velocity, attitude, and angular velocity errors between each servicer and its assigned docking port are required to asymptotically converge to zero.

For the $i^{\mathrm{th}}$ servicer, the relative position and velocity errors are defined as

\begin{align}
\Delta r^i &= r^i-r_d^i,\\
\Delta \dot r^i &= \dot r^i-\dot r_d^i,
\end{align}
where $r_d^i=R_{OT}r_{\mathrm{dock}}^i$ denotes the desired docking location in the LVLH frame and $R_{OT}$ denotes the rotation matrix from $\mathcal{F}_{BT} \rightarrow \mathcal{F_{O}}$.

The attitude and angular velocity errors are defined as

\begin{align}
\Delta R^i &= R^iR_d^T,\\
\Delta \omega^i &= R_d(\omega^i-\omega_d),
\end{align}
where $R_d$ and $\omega_d$ denote the attitude and angular velocity of the target.

Figures~\ref{fig:rendezvous_fig}-\ref{fig:docking_fig} illustrates the RVD objective of multiple servicers to the tumbling target in a simplified 2D configuration for visualization purposes.

\begin{figure}[hbtp]
\centering
\begin{subfigure}{0.48\columnwidth}
    \centering
    \includegraphics[width=\linewidth]{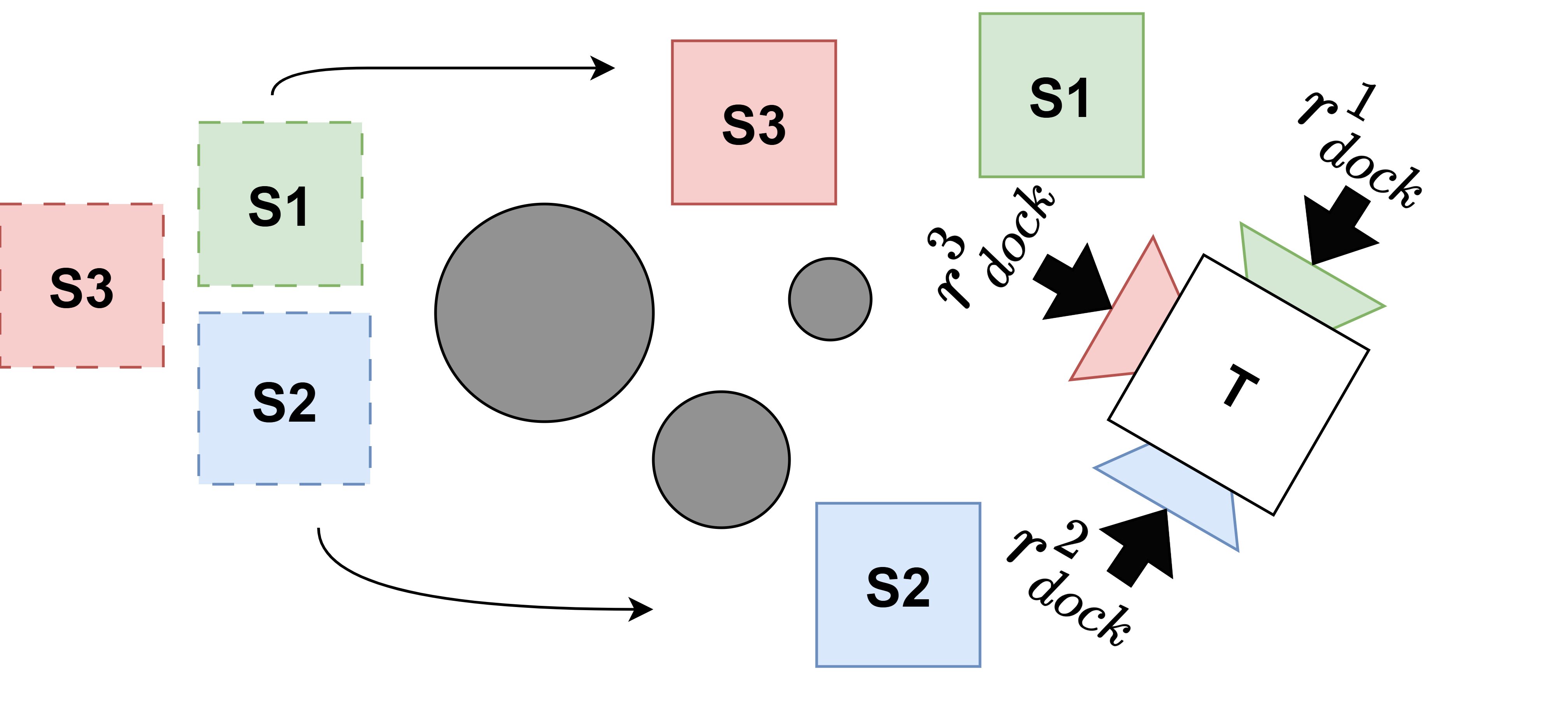}
    \caption{}
    \label{fig:rendezvous_fig}
\end{subfigure}
\hfill
\begin{subfigure}{0.48\columnwidth}
    \centering
    \includegraphics[width=\linewidth]{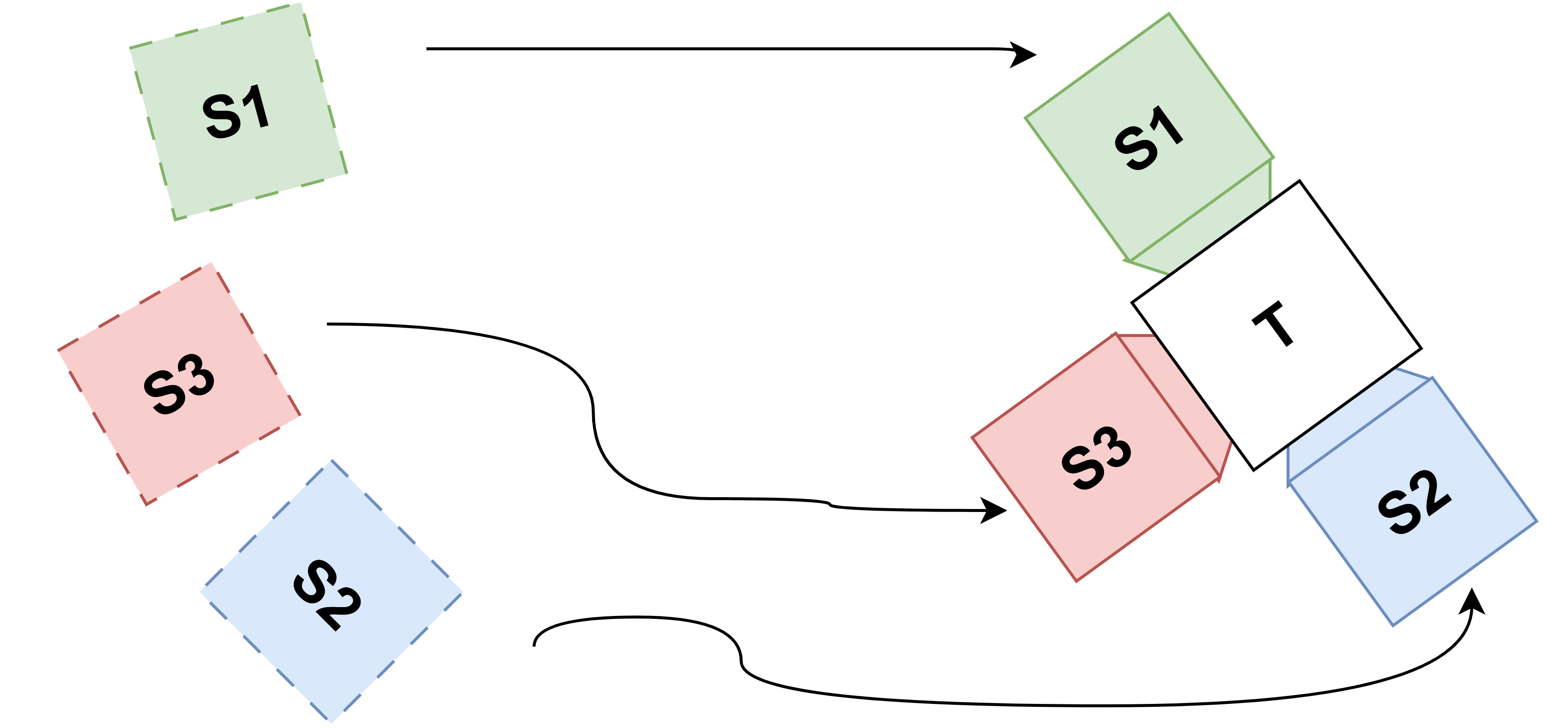}
    \caption{}
    \label{fig:docking_fig}
\end{subfigure}
\caption{(a) illustrates the rendezvous approach of the servicers denoted as S1, S2, and S3 towards the target spacecraft denoted as T with multiple obstacles in between, (b) illustrates the final docking phase in which each servicer's center of mass matches the position and velocity of its assigned target docking port, while also synchronizing its attitude and angular velocity with the target.}
\label{fig:RVD}
\end{figure}

\subsection{Constraint definition}
\label{constraint_section}
Constraints are implemented with the purpose of ensuring soft-docking, collision avoidance, and limiting the control inputs.
\subsubsection{Control input constraints}
The force and torque acting on the servicer is limited in order to constrain the amount of energy needed for the operation. $f^{max}$ and $\tau^{max}$ denote the maximum force and torque respectively that the actuators are constrained to.
\begin{equation}
    |f_{n}^i| \leq f^{max}, n = x,y,z
    \label{forceconstraint}
\end{equation}
\begin{equation}
    |\tau_{n}^i| \leq \tau^{max}, n = x,y,z
    \label{torqueconstraint}
\end{equation}
\subsubsection{Target collision avoidance constraint}
The dynamic target collision avoidance constraint from \cite{bashnick2023fast} uses a dynamic holding radius $r_{hold_{k}}^i$ at timestep $k$ which reduces once certain conditions are met.
\begin{equation}
    d^i_{t_k} = (r_{hold_{k}}^i)^2 - (r_k^i - r_{target_{k}})^T(r_k^i - r_{target_{k}}) \leq 0
    \label{nonlinear_CA}
\end{equation}
In order to linearize \eqref{nonlinear_CA}, an operating point is chosen on the boundary of the spherical "keep-out" zone that lies between the target's center and the position of the $i^{\mathrm{th}}$ servicer. The operating point is calculated by using the $i^{\mathrm{th}}$ servicers trajectory $r_{k-1}^i$ from the previous iteration. Since the targets position lies in the origin of frame $\mathcal{F_{O}}$, $r_{target_k}$ can be excluded from the final expression. Linearizing \eqref{nonlinear_CA} around the boundary point results in the following linearized expression.
\begin{equation}
    \tilde{d}^i_{t_k} = r_{hold_k}^i - \frac{(r^i_{k-1})^T}{||r^i_{k-1}||}r_k^i \leq 0
    \label{CA_finalconstraint}
\end{equation}

An initial holding radius $r_{hold,0}$ is defined for the start of the rendezvous and decreases based on if the servicers position is within a distance $\eta$ of the holding radius. If this condition is met, the holding radius at the next sampling instant is reduced by a factor of $\gamma$. The holding radius is reduced in this manner until the lower limit is met, i.e. docking distance to the target $r_{hold,min}$.

\subsubsection{Servicer collision avoidance constraint}
In order to avoid collision between the servicers a similar approach as \cite{li2017model} did for the target is defined. A servicer centered sphere with radius $r_{CAS}$ is used to define the "keep-out" zones of each servicer where $r_{s_k}^{i,j} = r_k^i -r_k^j = (x_{s_k}^{i,j},y_{s_k}^{i,j},z_{s_k}^{i,j})$ denotes the relative distance between the $i^{\mathrm{th}}$ servicer and a neighbouring $j^{\mathrm{th}}$ servicer in frame $\mathcal{F_{O}}$ at timestep k. 
\begin{equation}
    d_{s_k}^{i,j} = -((x_{s_k}^{i,j})^2 + (y_{s_k}^{i,j})^2 + (z_{s_k}^{i,j})^2) + r_{CAS}^2 \leq 0
    \label{CASquadratic}
\end{equation}

Since \eqref{CASquadratic} is quadratic, it is linearized around the relative position between two servicers from the previous iteration which results in the following linearized equation. 
\begin{equation}
    \tilde{d}_{s_k}^{i,j} = r_{CAS}^2 + (r_{s_{k|k-1}}^{i,j})^T(r_{s_{k|k-1}}^{i,j}) - 2(r_{s_{k|k-1}}^{i,j})^Tr_{s_{k|k}}^{i,j} \leq 0 
    \label{linear_CAS}
\end{equation}
Equation \eqref{linear_CAS} is imposed on the $i^{\mathrm{th}}$ servicer for every $j^{\mathrm{th}}$ servicer participating in the RVD.

\subsubsection{Angular velocity constraint}
The angular velocity of the chaser is constrained with respect to the maximum accepted angular velocity $\omega_i^{max}$.
\begin{equation}
    |\omega_{n}^i| \leq \omega^{max}, n =x,y,z
    \label{angvel_constraint}
\end{equation}
\subsubsection{Velocity constraint}
The velocity of the chaser is constrained in a similar way as the angular velocity in which it is constrained with respect to the maximum velocity accepted $v_{i}^{max}$ to ensure soft-docking.
\begin{equation}
    |\dot{r}_{n}^i| \leq v^{max}, n = x,y,z
    \label{vel_constraint}
\end{equation}

\section{Decentralized model predictive control design}
The decentralized MPC architecture for the $i^{\mathrm{th}}$ servicer can be seen in Figure~\ref{fig:control_architecture}. It consists of an inner loop with an error system that describes the closed-loop error dynamics in which the outer loop MPC operates on the resulting error state. The inner system becomes stable and closed-loop with the use of stabilizing feedback control variables resulting in a parameter- and time-invariant system. Linearizing the closed loop system results in a LTI prediction model. To achieve the control objective in Section~\ref{control_objective}, the control input is composed of two components: a MPC term and a stabilizing feedback term. The MPC component generates collision-free trajectories while satisfying input and safety constraints, whereas the stabilizing controller guarantees exponential convergence of the tracking errors to the desired docking state. The model is kept decentralized by sharing no information about the predicted inputs over the prediction horizon across the servicers. Instead, the servicers operate under Assumption~\ref{assumption1}, where the current relative position and velocity is available onboard at each timestep and propagated locally over the prediction horizon by each servicer. The model requires that each servicer has knowledge about the neighboring servicers respective docking coordinates. However, these are predetermined and remain constant throughout the RVD.
\begin{figure}[hbtp]
\centering
\includegraphics[width=\linewidth]{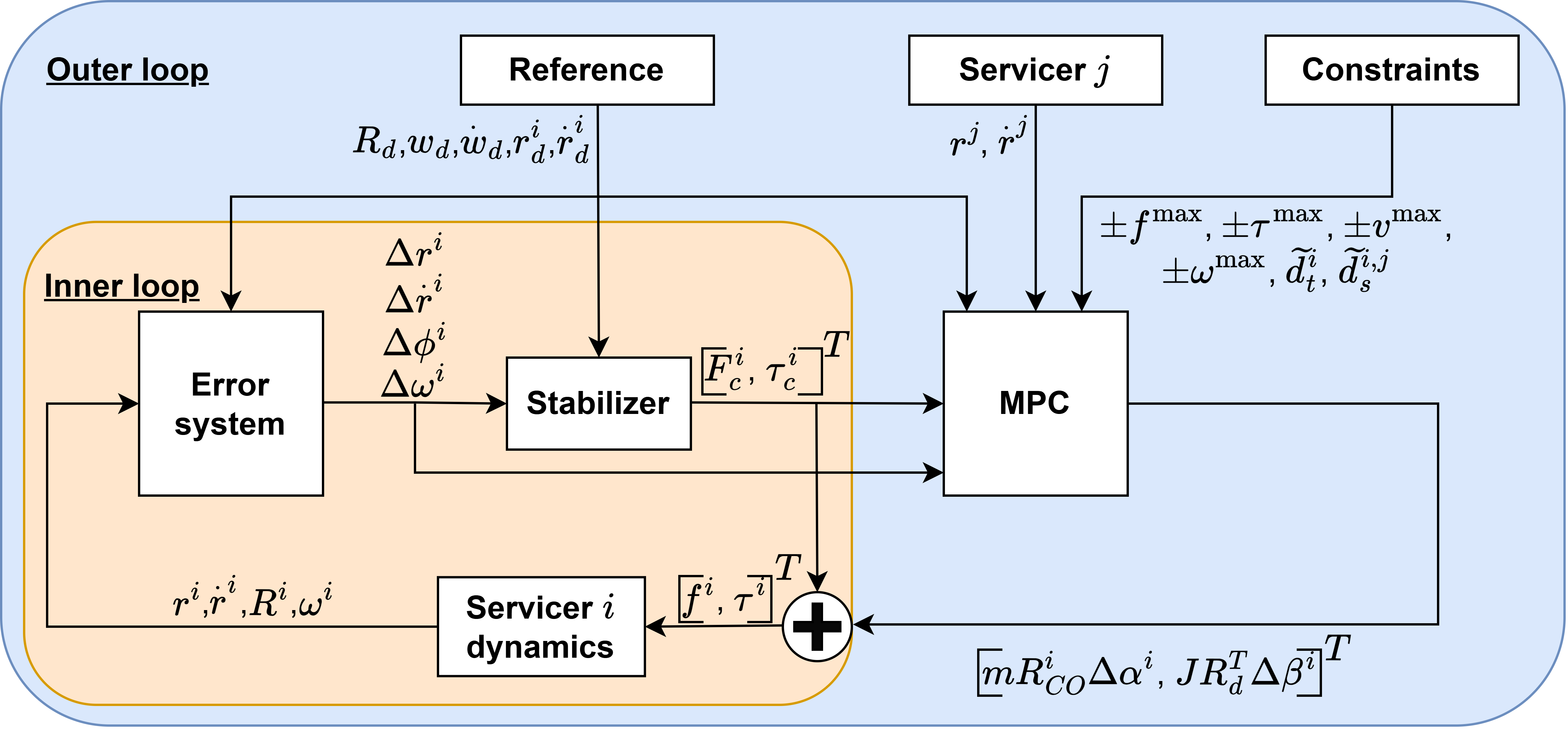}
\caption{Each $i^{\mathrm{th}}$ servicer functions independently using the same architecture, which consists of a dual-loop MPC scheme. With the help of the stabilizing control variables, the inner loop becomes a stable closed-loop system. Linearizing the closed-loop dynamics yields an LTI prediction model for the outer loop MPC. The relative position and velocity information of the $j^{\mathrm{th}}$ servicer are assumed available on-board.}
\label{fig:control_architecture}
\end{figure}

\subsection{Stabilizing feedback control}
\label{feedback_controllers}
In order to stabilize the inner system of the dual loop MPC, a stabilizing translational control force is chosen as,

\begin{equation}
F_c^i
=
m^iR_{CO}^i [\ddot r_d^i
-
A_rr_d^i
-
A_v\dot r_d^i -
(A_r+K_p)\Delta r^i
-
(A_v+K_v)\Delta\dot r^i]
\label{stabilizingforce}
\end{equation}
where $K_p = k_pI_3$ and $K_v = k_vI_3$.

\begin{theorem}
Consider the relative translational dynamics of the $i^{\mathrm{th}}$ servicer under the feedback control law \eqref{stabilizingforce}. If $K_p>0$ and $K_v>0$, then the equilibrium point

\[
(\Delta r^i,\Delta\dot r^i)=(0,0)
\]

is globally asymptotically stable.
\end{theorem}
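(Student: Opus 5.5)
The plan is to substitute the feedback law \eqref{stabilizingforce} into the Clohessy--Wiltshire dynamics \eqref{eq:linear}, show that the closed-loop error dynamics collapse to a linear time-invariant, decoupled second-order system, and then conclude global asymptotic (in fact exponential) stability. Either a Hurwitz eigenvalue argument or a Lyapunov function would serve for the last step. Throughout I take $f^i=F_c^i$ (the unsaturated case, ignoring \eqref{forceconstraint}), $m^i=m$, and $R_{CO}^i=(R_{OC}^i)^T$, so that $R_{OC}^iR_{CO}^i=I_3$ because both are rotation matrices. I also assume $r_d^i=R_{OT}r^i_{\mathrm{dock}}$ is twice continuously differentiable, which holds since the target attitude evolves smoothly.

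\emph{Step 1 (error dynamics).} Substituting $f^i=F_c^i$ into \eqref{eq:linear} cancels the factor $R_{OC}^i R_{CO}^i m/m$ and gives
\[
\ddot r^i = A_v\dot r^i + A_r r^i + \ddot r_d^i - A_r r_d^i - A_v\dot r_d^i - (A_r+K_p)\Delta r^i - (A_v+K_v)\Delta\dot r^i .
\]
Subtracting $\ddot r_d^i$ and using $r^i-r_d^i=\Delta r^i$ and $\dot r^i-\dot r_d^i=\Delta\dot r^i$, the drift terms combine to $A_v\Delta\dot r^i + A_r\Delta r^i$. These cancel exactly against the $A_r$ and $A_v$ parts of the feedback, leaving
\[
\Delta\ddot r^i = -K_p\Delta r^i - K_v\Delta\dot r^i .
\]
This equation is autonomous and LTI, independent of $n$ and of the target's tumbling motion. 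It is also decoupled into three identical scalar systems $\ddot e + k_v\dot e + k_p e=0$, since $K_p=k_pI_3$ and $K_v=k_vI_3$.

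\emph{Step 2 (stability).} With state $x=(\Delta r^i,\Delta\dot r^i)$, the system matrix is $\begin{bmatrix}0&I_3\\-k_pI_3&-k_vI_3\end{bmatrix}$. Its eigenvalues are the roots of $\lambda^2+k_v\lambda+k_p=0$, each with multiplicity three. For $k_p,k_v>0$ the Routh--Hurwitz criterion for second-order polynomials gives roots with strictly negative real parts. Hence the origin of this linear system is globally exponentially, and therefore globally asymptotically, stable. As a Lyapunov alternative, I would take $V=\tfrac12 k_p\|\Delta r^i\|^2+\tfrac12\|\Delta\dot r^i\|^2$, which is radially unbounded, and compute $\dot V=-k_v\|\Delta\dot r^i\|^2\le 0$. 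On the set where $\dot V=0$ we have $\Delta\dot r^i\equiv 0$, which forces $\Delta r^i=0$; LaSalle's invariance principle then gives global asymptotic stability. Adding a small cross term $\epsilon\,(\Delta r^i)^T\Delta\dot r^i$ to $V$ would yield a strict Lyapunov function if one is preferred.

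\emph{Main obstacle.} The only delicate point is Step 1: making sure the cancellation is exact. This requires the sign conventions of \eqref{stabilizingforce} to match \eqref{eq:linear}, and it requires consistent use of $R_{CO}^i=(R_{OC}^i)^T$ and the servicer mass. I would verify carefully that the residual $A_r$ and $A_v$ terms vanish rather than leaving a term such as $-K_v\Delta\dot r^i$ perturbed by the skew matrix $A_v$. I would also state explicitly that the result concerns the nominal inner loop without actuator saturation. Global stability cannot be claimed once the bounds \eqref{forceconstraint} are active, which is precisely why the outer MPC loop handles the constraints.
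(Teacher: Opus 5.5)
Your proof is correct, and your Lyapunov alternative in Step 2 is exactly the paper's proof. It uses the same $V^i=\tfrac12(\Delta r^i)^TK_p\Delta r^i+\tfrac12\|\Delta\dot r^i\|^2$, the same derivative $\dot V^i=-(\Delta\dot r^i)^TK_v\Delta\dot r^i\le 0$, and LaSalle's invariance principle.

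Your primary argument takes a different, spectral route. You treat the closed loop as the LTI system with matrix $A_p$, factor its characteristic polynomial as $(\lambda^2+k_v\lambda+k_p)^3$, and conclude it is Hurwitz.
\begin{itemize}
\item \textbf{What the spectral route buys:} a stronger conclusion than the theorem states, namely global exponential stability, with no invariance argument needed. For the simulation choice $k_v=2\sqrt{k_p}$, all closed-loop poles sit at $-\sqrt{k_p}$.
\item \textbf{What it costs:} the one-line factorization relies on the isotropic gains $K_p=k_pI_3$, $K_v=k_vI_3$. The paper's Lyapunov computation goes through unchanged for any symmetric positive definite $K_p$ and $K_v$.
\end{itemize}

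You also make explicit two points the paper leaves implicit:
\begin{itemize}
\item the cancellation of the $A_r$ and $A_v$ drift terms, using $R_{OC}^iR_{CO}^i=I_3$ and the common mass, which the paper only asserts;
\item the radial unboundedness of $V^i$, and the fact that $\Delta\dot r^i\equiv 0$ must hold along a whole invariant trajectory, which is what the global LaSalle conclusion actually uses.
\end{itemize}
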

\begin{proof}
Consider the Lyapunov candidate

\begin{equation}
V^i=
\frac{1}{2}(\Delta r^i)^TK_p\Delta r^i
+
\frac{1}{2}(\Delta\dot r^i)^T\Delta\dot r^i,
\end{equation}

which is positive definite for $K_p>0$. Substituting the control law \eqref{stabilizingforce} into the translational error dynamics yields

\begin{equation}
\Delta\ddot r^i=-K_p\Delta r^i-K_v\Delta\dot r^i.
\end{equation}

The time derivative of $V^i$ along the closed-loop trajectories is

\begin{equation}
\dot V^i
=
-(\Delta\dot r^i)^TK_v\Delta\dot r^i
\leq 0.
\end{equation}

Since $K_v>0$, $\dot V^i=0$ only when $\Delta\dot r^i=0$. From the closed-loop dynamics, this implies $K_p\Delta r^i=0$, and therefore $\Delta r^i=0$. By LaSalle's invariance principle \cite{lasalle1960some}, the equilibrium $(\Delta r^i,\Delta\dot r^i)=(0,0)$ is globally asymptotically stable.
\end{proof}

Similarly for the $i^{\mathrm{th}}$ servicer, the stabilizing control torque is derived in \cite{stadler2026lightweight} as,

\begin{equation}
\tau_c^i
=
\omega^i\times L
+
J(\dot{\omega}_d-\omega_d\times\omega^i)
-
JR_d^T
\left(
K_c\Delta\phi^i
+
K_w\Delta\omega^i
\right),
\label{stabilizingtorque}
\end{equation}
where $\Delta\phi^i := \log(\Delta R^i)$ is the error rotation vector and $K_c = k_cI_3$, $K_w = k_wI_3$.

\subsection{Error dynamics}
The error dynamics of the $i^{\mathrm{th}}$ servicer that are used in the prediction model, can be described by $\Delta \dot{r}_i, \Delta \ddot{r}_i, \Delta \dot{\phi}, \Delta \dot{\omega}$, denoting the time derivatives of the relative position and velocity error, as well as the attitude and angular velocity error between the $i^{\mathrm{th}}$ servicer and target respectively. The error dynamics are obtained through the stabilizing control variables in Section~\ref{feedback_controllers}. 

\begin{align}
\Delta\ddot{r}^i &= -K_p\Delta r^i -K_v\Delta \dot{r}^i + I_3\Delta\alpha^i ,
& 
\Delta\dot{r}^i &= I_3\Delta\dot{r}^i,
\label{ddot_deltar}
\end{align}
\begin{align}
\Delta\dot{\omega}^i &= -K_c\Delta \phi^i -K_w\Delta \omega^i + I_3\Delta\beta^i,
&
\Delta\dot{\phi}^i &= I_3\Delta\omega^i,
\label{dot_deltaw}
\end{align}

where $\Delta\alpha^i$ and $\Delta\beta^i$ are the translational and rotational correction inputs generated by the MPC,

For the MPC formulation, \eqref{ddot_deltar}-\eqref{dot_deltaw} are rewritten as,

\begin{equation}
\dot{X}^i = AX^i + BU^i,
\label{xdot_ss}
\end{equation}

with

\begin{align}
X^i &= \begin{bmatrix}
\Delta r^i &
\Delta\dot r^i &
\Delta\phi^i &
\Delta\omega^i
\end{bmatrix}^{T},
&
U^i &= \begin{bmatrix}
\Delta\alpha^i &
\Delta\beta^i
\end{bmatrix}^{T}.
\end{align}

The system matrices are given by

\begin{equation}
A=
\begin{bmatrix}
A_p & 0\\
0 & A_a
\end{bmatrix},
\qquad
B=
\begin{bmatrix}
B_p & 0\\
0 & B_a
\end{bmatrix},
\label{system_matrices}
\end{equation}

where

\begin{align}
A_p &=
\begin{bmatrix}
0 & I_3\\
-K_p & -K_v
\end{bmatrix},
&
A_a &=
\begin{bmatrix}
0 & I_3\\
-K_c & -K_w
\end{bmatrix},
&
B_p &= B_a =
\begin{bmatrix}
0\\
I_3
\end{bmatrix}.
\end{align}

In order to implement this using MPC, \eqref{xdot_ss} is discretized with a sampling period $T_s$ in which a time-discrete model can be defined as,

\begin{equation}
    X_{k+1}^i = A_dX_k^i + B_dU_k^i
\end{equation}

where $X_k^i$ and $U_k^i$ denote the state and control vectors at timestep $k$ respectively. The continuous matrices $A$ and $B$ in \eqref{system_matrices} are discretized with a zero-order hold (ZOH) resulting in the discrete-time matrices $A_d$ and $B_d$.

\subsection{MPC formulation}
The MPC has a prediction horizon of $N_p$ with a timestep of $\Delta t$. It consists of a state cost matrix $Q$ and an input cost matrix $R$ which are both symmetric and positive semi-definite. The terminal cost $P$ is calculated by solving the discrete algebraic Riccati equation. Since the MPC uses the error states between target and the $i^{\mathrm{th}}$ servicer, the constraints are mapped to constrain the physical parameters of interest seen in Section~\ref{constraint_section} for each of the $i^{\mathrm{th}}$ servicer.
\begin{equation}
    \underset{U_k^i}{min}\ ||X^i_{Np}||^2_{P} + \sum_{k <N_p}||X^i_k||^2_{Q} + ||U^i_k||^2_{R},\ i\in S
\end{equation}
\begin{equation}
    s.t. \ \ \ \ \ \ \  X^i_{k|k} = X^i_k
\end{equation}
\begin{equation}
    \ \ \ \ \ \ \ \ \ \ X^i_{k+1}= A_dX^i_k + B_dU^i_k
\end{equation}
\begin{align}
    f^i&\in [-f^{max},f^{max}], & \tau^i&\in [-\tau^{max},\tau^{max}],
\end{align}
\begin{align}
    \omega^i&\in [-\omega^{max},\omega^{max}], & \dot{r}^i&\in [-v^{max},v^{max}],
\end{align}
\begin{align}
    \tilde{d}^i_{t,k}&\leq 0, & \tilde{d}_{s_k}^{i,j}\leq 0, j \in S, j\neq i,
\end{align}

\section{Simulation results}
\subsection{Simulation framework}
Simulation parameters are chosen to present a proof-of-concept RVD scenario. The MPC weighting matrices and controller gains correlating to the translational part was chosen empirically to provide stable tracking. The matrices and gains associated with the attitude part were based on the findings in \cite{stadler2026lightweight}. The input constraints, and the servicer collision avoidance constraints were chosen for the proof-of-concept scenario. The target collision avoidance constraint is based off of the target dimensions in order to accurately simulate RVD to the target. The RVD procedure is simulated using fixed obstacles which are meant to be in the servicers pathway to the target. The "keep-out zone" radius for obstacle 1-3 (obstacle 1 is closest to the servicers, obstacle 3 is closest to the target) is $1.6\ m,\ 1.2\ m,\ 0.8\ m$ respectively. The servicer's angular velocity is initialized as $w_0^i = [0,0,0]^T\ rad/s$ for $i = 1,2,3$. The positive feedback gains are set to $k_p = 0.55$, $k_v = \sqrt{4k_p}$, $k_c = 0.36$, and $k_w = \sqrt{4k_c}$. The input matrices are defined as $R_p = R_a = 100I_3$.
\begin{align}
    R &= \begin{bmatrix}
           R_p & 0_{3\times 3} \\
           0_{3\times 3} & R_a \\
         \end{bmatrix}
\end{align}
The weighting matrices are defined in the following equations.
\begin{align}
    Q_{a} &= Q_p = \begin{bmatrix}
           10I_3 & 0_{3\times 3} \\
           0_{3\times 3} & I_3 \\
         \end{bmatrix}
    &
    Q &= \begin{bmatrix}
           Q_p & 0_{6\times 6} \\
           0_{6\times 6} & Q_a \\
         \end{bmatrix}
\end{align}
The simulation is initialized by the parameters found in Table~\ref{tab:sim_parameters_general} and Table~\ref{tab:sim_parameters_target_servicers}.
\begin{table}[hbtp]
\centering
\begin{tabular}{|c|c |}
\hline
\bf{Parameter} & \bf{Value} \\
\hline
Simulation time & 40 $s$ \\
Prediction horizon, $N_p$ & 11\\
Timestep, $\Delta t$ & 0.1 $s$ \\
Number of servicers, $N_s$ & 3  \\
Servicer's safety distance, $r_{CAS}$ & 1 $m$ \\
Max force, $f^{max}$ & 120 $N$ \\
Max torque, $\tau^{max}$ & 40 $Nm$ \\
Max velocity, $v^{max}$ & 1 $m/s$ \\
Max angular velocity, $\omega^{max}$ & 0.5 $rad/s$ \\
Initial holding radius, $r_{hold,0}$ & 3 $m$ \\
Min holding radius, $r_{hold,min}$ & 1 $m$ \\
Scaling factor, $\gamma$ & 0.95 \\
Evaluating distance, $\eta$ & $\sqrt{0.5}$ $m$ \\
Servicer's mass, $m$ & 150 $kg$ \\
Servicer's inertia matrix, $J$ & $J=diag(85,\ 94,\ 92)$ $kg\cdot m^2$ \\
Servicer's angular momentum, $L$ &  $(34, 28.2, 0)^T$ $kg\cdot m^2/s$ \\
\hline
\end{tabular}
\caption{General simulation parameters}
\label{tab:sim_parameters_general}
\end{table}
\begin{table}[hbtp]
\centering
\begin{tabular}{|c|c| }
\hline
\bf{Parameter} & \bf{Value}  \\
\hline
\multicolumn{2}{|c|}{\textbf{Target}} \\
\hline
Semi-major axis & 7178160 $m$ \\
Eccentricity & 0.002  \\
Inclination & 60$^\circ$ \\
RAAN & 30$^\circ$ \\
Argument of pericentre & 0$^\circ$ \\
True anomaly & 0$^\circ$\\
Angular velocity, $\omega_d$ & $[0.4,\ 0.3,\ 0.2]^T$ $rad/s$ \\
Angular acceleration, $\dot{\omega}_d$ & $[0,\ 0,\ 0]^T$ $rad/s^2$ \\
Initial rotation vector, $\phi_{d,0}$ & $[2.3,\ -2,\ 0]^T$ $rad$ \\
\hline
\multicolumn{2}{|c|}{$\mathbf{1^{\mathrm{st}}}$ \textbf{Servicer}} \\
\hline
Initial rotation vector, $\phi_0^1$ & $[0,\ 0,\ 0]^T$ $rad$ \\
Initial relative position, $r_0^1$ & $[-8,\ 15,\ 16]^T$ $m$\\
Initial relative velocity, $\dot{r}_0^1$ & $[0,\ 0.01,\ 0]^T$ $m/s$ \\
Target docking port, $r_{dock}^1$ & $[0,\ 0,\ 1]^T$ $m$ \\
\hline
\multicolumn{2}{|c|}{$\mathbf{2^{\mathrm{nd}}}$ \textbf{Servicer}} \\
\hline
Initial rotation vector, $\phi_0^2$ & $[0.4,\ 0.6,\ -0.3]^T$  $rad$ \\
Initial relative position, $r_0^2$ &$[-8,\ 17,\ 16]^T$ $m$\\
Initial relative velocity, $\dot{r}_0^2$ & $[0.01,\ 0,\ 0]^T$ $m/s$ \\
Target docking port, $r_{dock}^2$ & $[0,\ 0,\ -1]^T$ $m$ \\
\hline
\multicolumn{2}{|c|}{$\mathbf{3^{\mathrm{rd}}}$ \textbf{Servicer}} \\
\hline
Initial rotation vector, $\phi_0^3$ & $[1,\ -0.8,\ 0.2]^T$ $rad$ \\
Initial relative position, $r_0^3$ & $[-10,\ 15,\ 16]^T$ $m$\\
Initial relative velocity, $\dot{r}_0^3$ & $[0,\ 0,\ 0.01]^T$ $m/s$ \\
Target docking port, $r_{dock}^3$ & $[1,\ 0,\ 0]^T$ $m$ \\
\hline
\end{tabular}
\caption{Target and servicer simulation parameters}
\label{tab:sim_parameters_target_servicers}
\end{table}

\begin{figure}[!t]
\centering

\begin{subfigure}{0.235\textwidth}
    \centering
    \includegraphics[width=\linewidth]{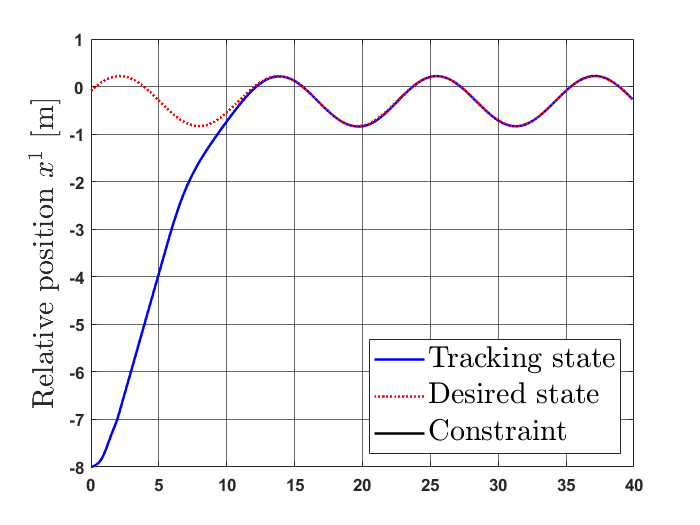}
    \label{fig:xpos_s1}
\end{subfigure}
\hfill
\begin{subfigure}{0.235\textwidth}
    \centering
    \includegraphics[width=\linewidth]{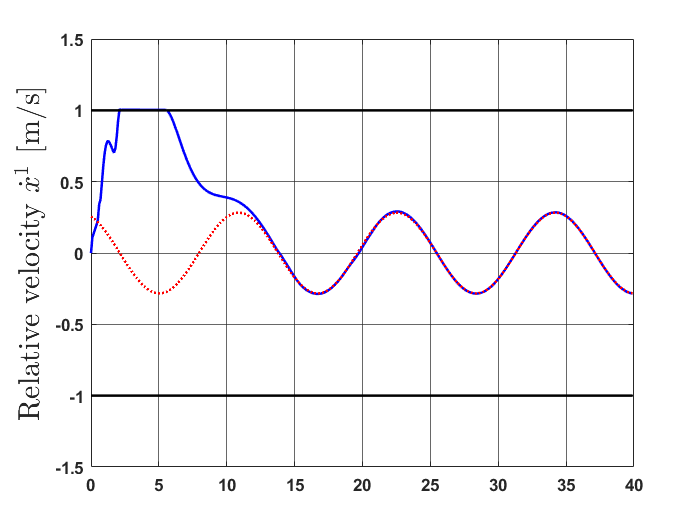}
    \label{fig:xvel_s1}
\end{subfigure}
\hfill
\begin{subfigure}{0.235\textwidth}
    \centering
    \includegraphics[width=\linewidth]{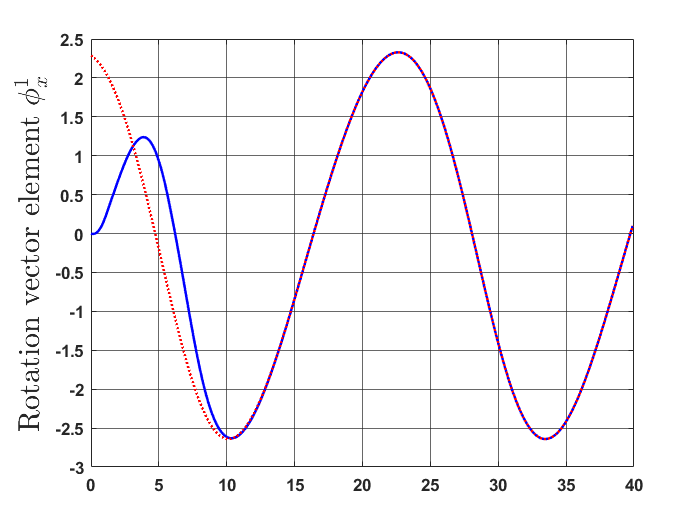}
    \label{fig:phix_s1}
\end{subfigure}
\hfill
\begin{subfigure}{0.235\textwidth}
    \centering
    \includegraphics[width=\linewidth]{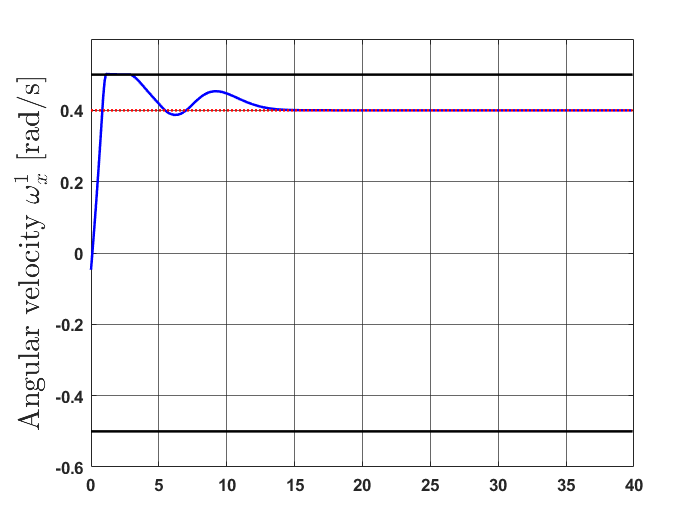}
    \label{fig:avelx_s1}
\end{subfigure}

\vspace{-0.60cm}

\begin{subfigure}{0.235\textwidth}
    \centering
    \includegraphics[width=\linewidth]{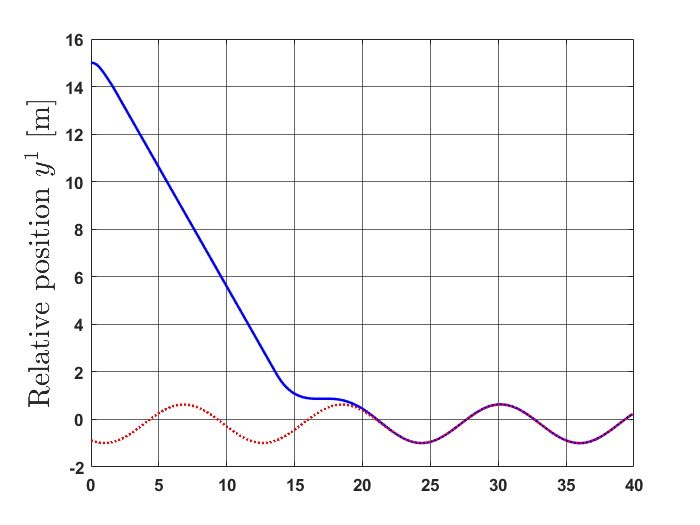}
    \label{fig:ypos_s1}
\end{subfigure}
\hfill
\begin{subfigure}{0.235\textwidth}
    \centering
    \includegraphics[width=\linewidth]{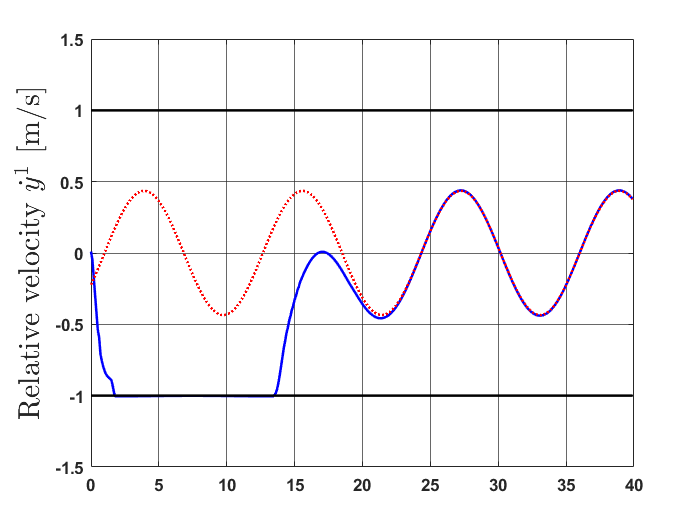}
    \label{fig:yvel_s1}
\end{subfigure}
\hfill
\begin{subfigure}{0.235\textwidth}
    \centering
    \includegraphics[width=\linewidth]{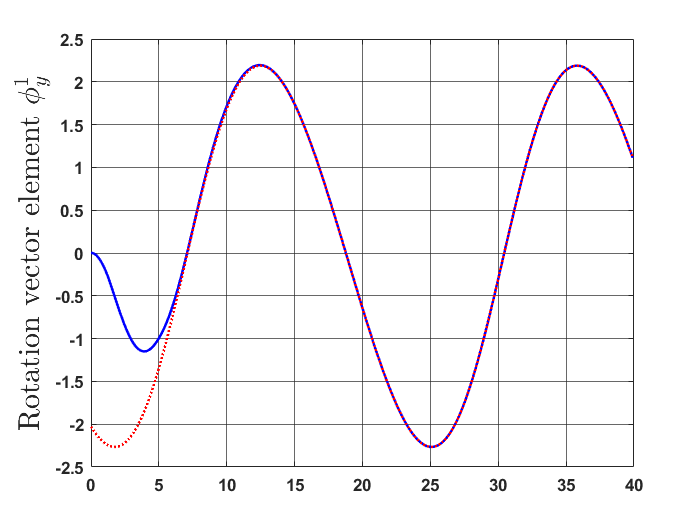}
    \label{fig:phiy_s1}
\end{subfigure}
\hfill
\begin{subfigure}{0.235\textwidth}
    \centering
    \includegraphics[width=\linewidth]{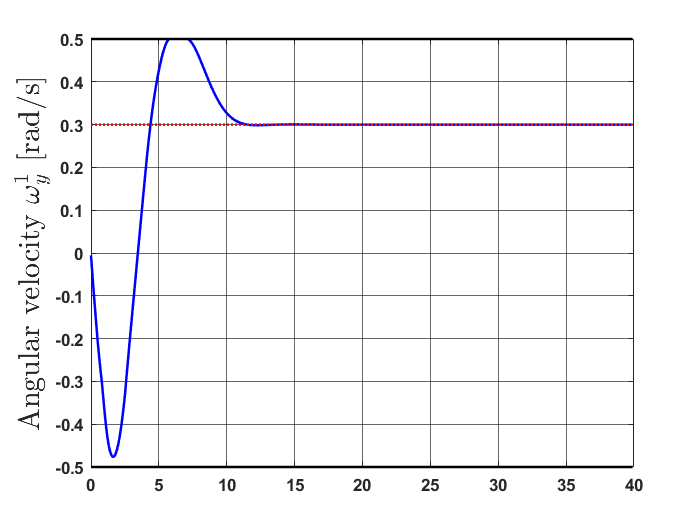}
    \label{fig:avely_s1}
\end{subfigure}

\vspace{-0.60cm}

\begin{subfigure}{0.235\textwidth}
    \centering
    \includegraphics[width=\linewidth]{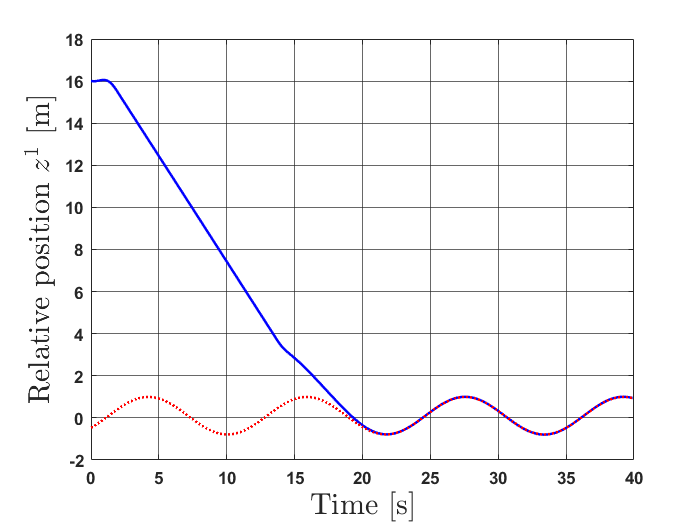}
    \caption{}
    \label{fig:pos_s1}
\end{subfigure}
\hfill
\begin{subfigure}{0.235\textwidth}
    \centering
    \includegraphics[width=\linewidth]{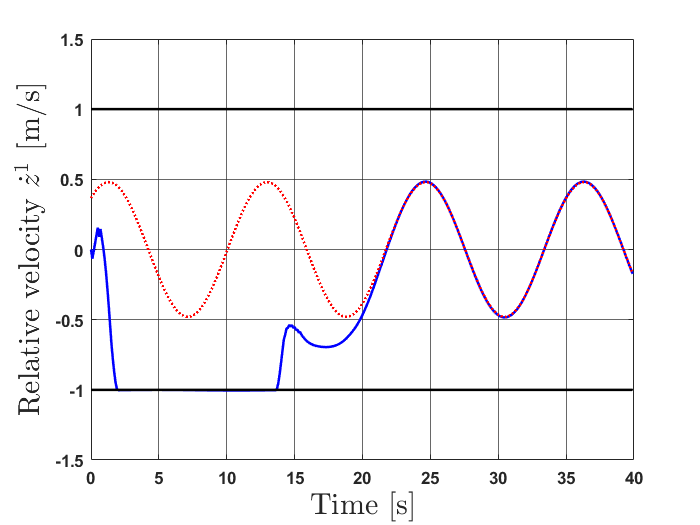}
    \caption{}
    \label{fig:vel_s1}
\end{subfigure}
\hfill
\begin{subfigure}{0.235\textwidth}
    \centering
    \includegraphics[width=\linewidth]{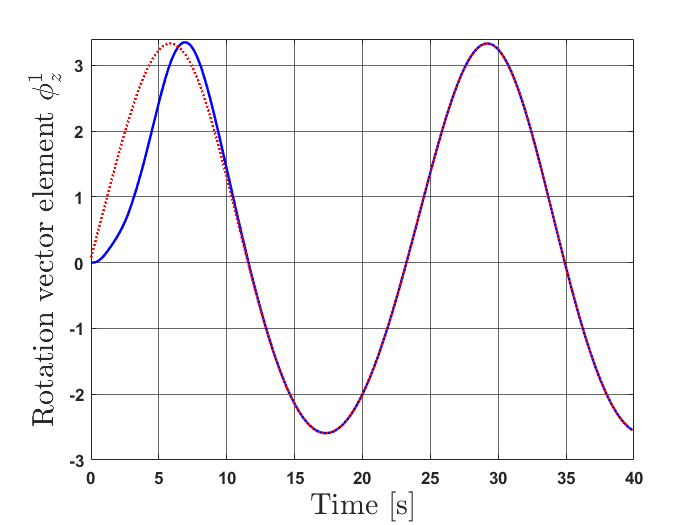}
    \caption{}
    \label{fig:att_s1}
\end{subfigure}
\hfill
\begin{subfigure}{0.235\textwidth}
    \centering
    \includegraphics[width=\linewidth]{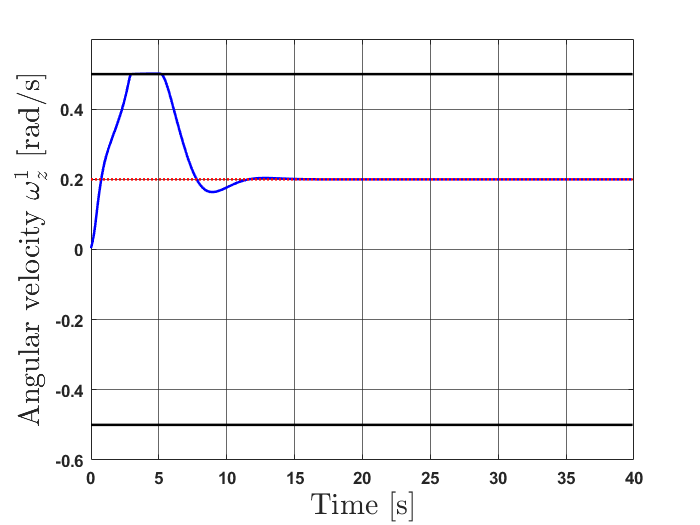}
    \caption{}
    \label{fig:avel_s1}
\end{subfigure}

\caption{Time history of servicer 1's tracking states throughout the RVD. (a)--(d) show the servicer tracking the desired states of relative position, relative velocity, attitude, and angular velocity respectively. All states are successfully tracked, and in (b) and (d), the velocity constraints are satisfied throughout the RVD.}
\label{fig:s1_states}
\end{figure}
\begin{figure}[!t]
\centering

\begin{subfigure}{0.32\textwidth}
    \centering
    \includegraphics[width=\linewidth]{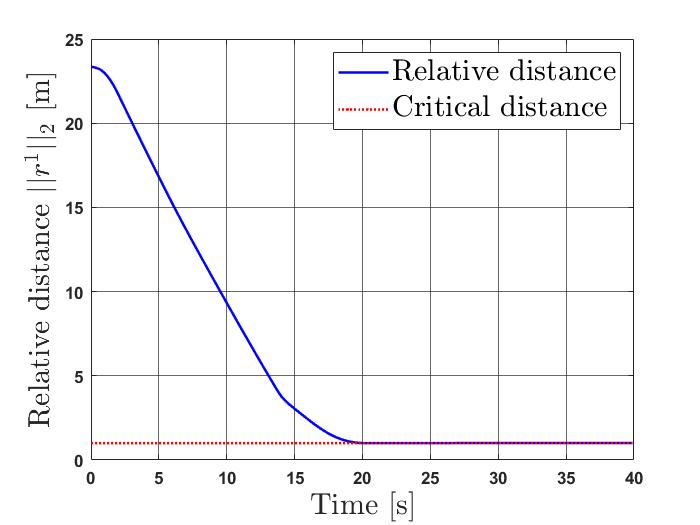}
    \caption{}
    \label{fig:reldist_s1}
\end{subfigure}
\hfill
\begin{subfigure}{0.32\textwidth}
    \centering
    \includegraphics[width=\linewidth]{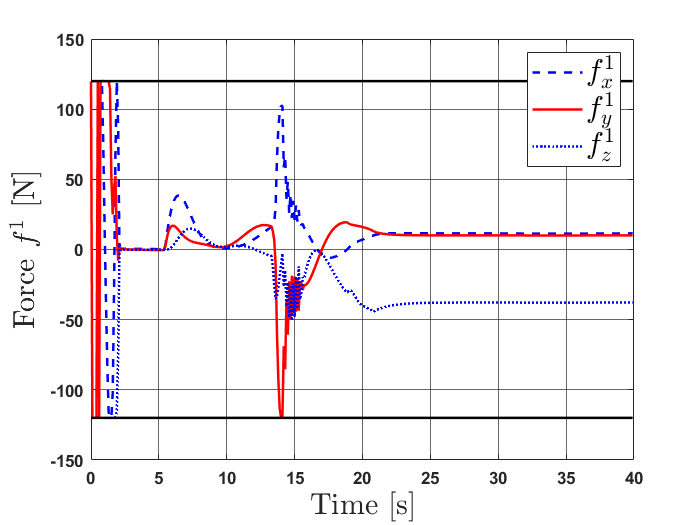}
    \caption{}
    \label{fig:force_s1}
\end{subfigure}
\hfill
\begin{subfigure}{0.32\textwidth}
    \centering
    \includegraphics[width=\linewidth]{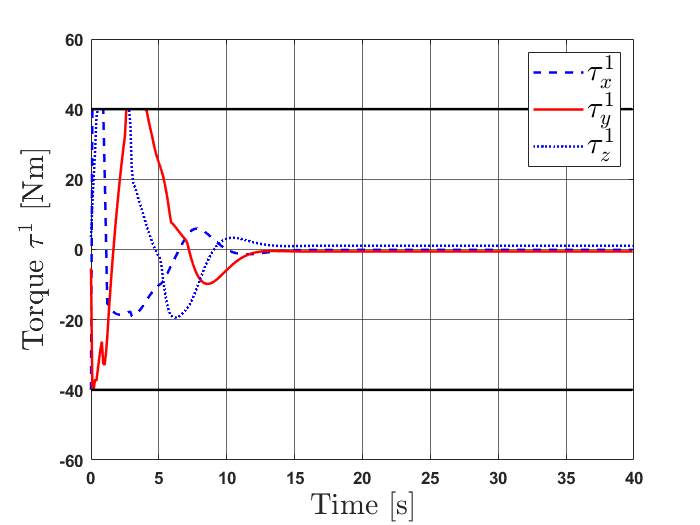}
    \caption{}
    \label{fig:torque_s1}
\end{subfigure}

\caption{Time history of servicer 1's relative distance and inputs throughout the RVD. (a)--(c) shows the relative distance, force, and torque respectively. (a) shows the relative distance respecting the imposed target collision avoidance constraint. (b)--(c) shows the force and torque inputs satisfying the imposed input constraints.}
\label{fig:inputs_reldist}
\end{figure}

\subsection{Results}
As can be seen in Figure~\ref{fig:poserr}, all of the servicers relative position errors converges to zero after about 20-25 seconds into the simulation, and the relative velocities in Figure~\ref{fig:velerr} converges around 25 seconds. In Figures~\ref{fig:atterr}-\ref{fig:avelerr}, the servicers attitude and angular velocities error converges faster to zero due to no external disturbances nor coupling effects which allows for fast tracking throughout the procedure.
\begin{figure}[!hbtp]
\centering
\begin{subfigure}{0.33\columnwidth}
    \centering
    \includegraphics[width=\linewidth]{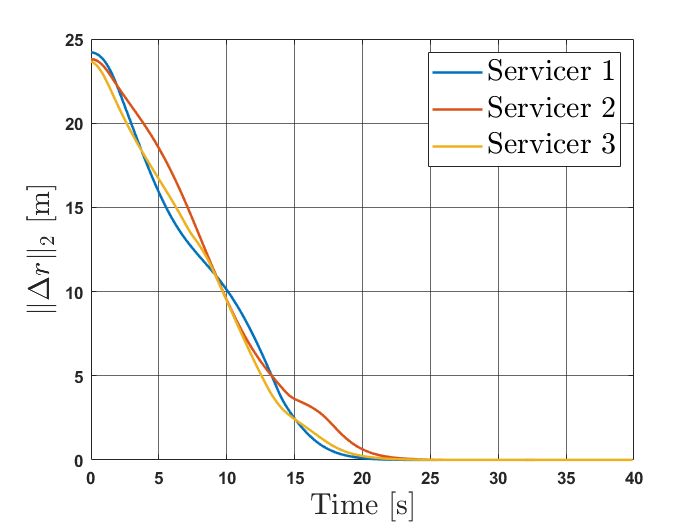}
    \caption{}
    \label{fig:poserr}
\end{subfigure}
\begin{subfigure}{0.33\columnwidth}
    \centering
    \includegraphics[width=\linewidth]{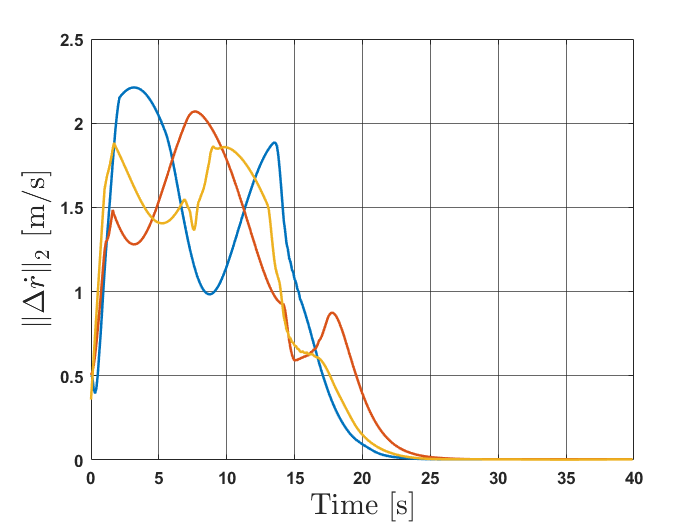}
    \caption{}
    \label{fig:velerr}
\end{subfigure}
\par
\begin{subfigure}{0.33\columnwidth}
    \centering
    \includegraphics[width=\linewidth]{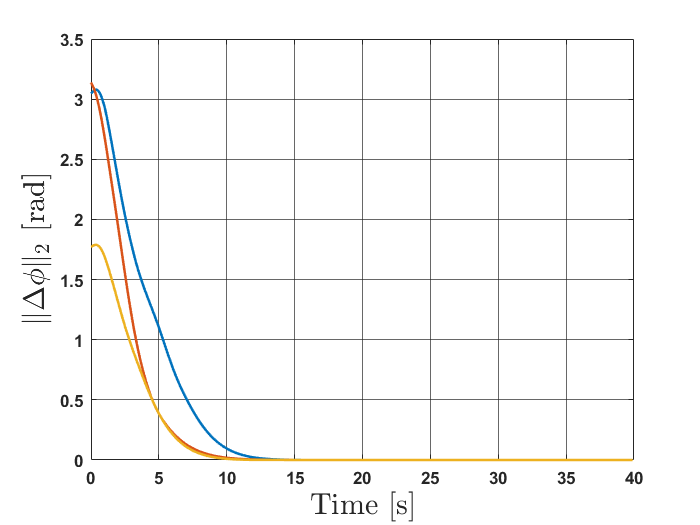}
    \caption{}
    \label{fig:atterr}
\end{subfigure}
\begin{subfigure}{0.33\columnwidth}
    \centering
    \includegraphics[width=\linewidth]{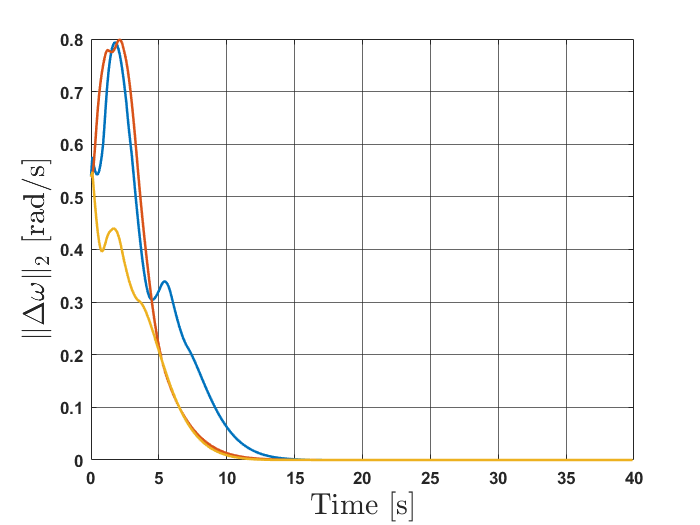}
    \caption{}
    \label{fig:avelerr}
\end{subfigure}
\caption{Time history of the servicers tracking state errors. (a)-(d) shows the position, velocity, attitude, and angular velocity errors converging to zero respectively, between the servicers and respective target docking coordinates.}
\label{fig:servicer_errors}
\end{figure}
\begin{figure}[!hbtp]
\centering
\begin{subfigure}{0.33\columnwidth}
    \centering
    \includegraphics[width=\linewidth]{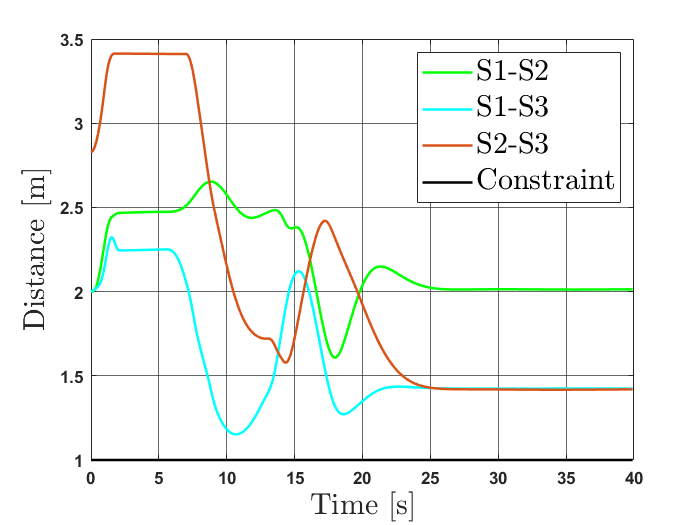}
    \caption{}
    \label{fig:servicer_sep}
\end{subfigure}
\begin{subfigure}{0.33\columnwidth}
    \centering
    \includegraphics[width=\linewidth]{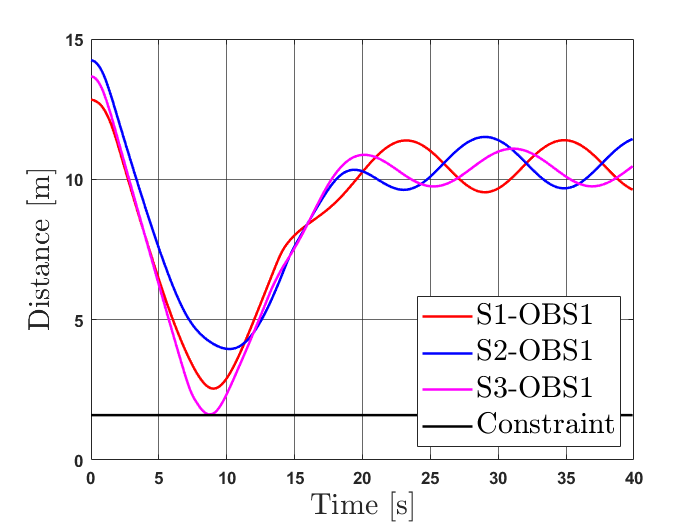}
    \caption{}
    \label{fig:obs1_sep}
\end{subfigure}
\par
\begin{subfigure}{0.33\columnwidth}
    \centering
    \includegraphics[width=\linewidth]{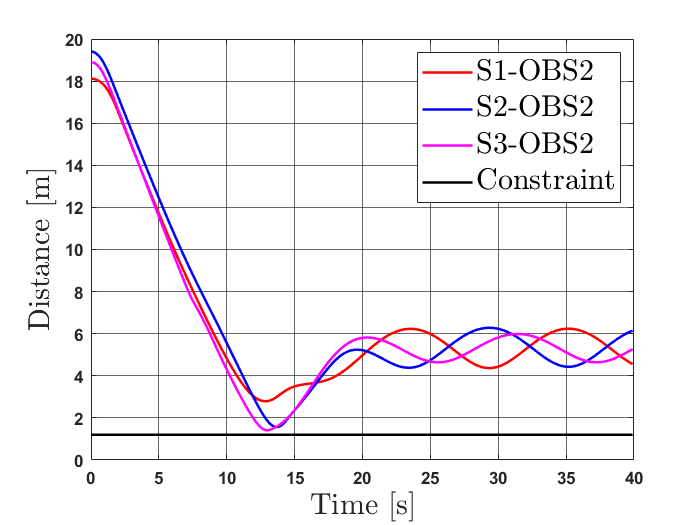}
    \caption{}
    \label{fig:obs2_sep}
\end{subfigure}
\begin{subfigure}{0.33\columnwidth}
    \centering
    \includegraphics[width=\linewidth]{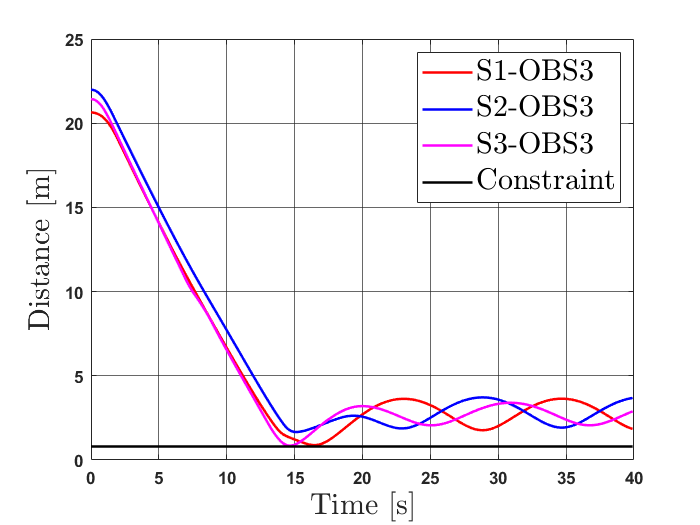}
    \caption{}
    \label{fig:obs3_sep}
\end{subfigure}
\caption{Time history of non-target separation distances. (a) shows the separation distance between servicers during the approach to their respective target docking points. (b)-(d) shows the separation distance between the servicers and obstacle 1-3 during RVD to the target respectively.}
\label{fig:separation_distances}
\end{figure}
As can be seen in the results of the $1^{\mathrm{st}}$ servicer, the position tracking in Figure~\ref{fig:pos_s1} performs well and manages to respect the target collision avoidance constraint as can be seen in Figure~\ref{fig:reldist_s1}. The relative velocity of the servicers are constrained in order to satisfy soft-docking conditions which is also respected and seen in Figure~\ref{fig:vel_s1}. Servicer 1 is initialized to be aligned with the inertial frame and it can be seen in Figure~\ref{fig:att_s1} and Figure~\ref{fig:avel_s1} that the attitude and angular velocity tracks faster than the translational parts, and the angular velocity limit of the servicers is respected as well.

The force and torque inputs in the servicers body frames can be seen in Figures~\ref{fig:force_s1}--\ref{fig:torque_s1} respectively, where they remain inside of the constrained limits. Also here it can be clearly seen that the torque converges faster whereas the force takes longer before stabilizing around non-zero force components needed to maintain tracking of the relative position and velocity of the targets docking port.

In Figure~\ref{fig:separation_distances} the separation distances between servicer-servicer can be seen in Figure~\ref{fig:servicer_sep} and servicer-obstacle distances can be seen in Figures~\ref{fig:obs1_sep}-\ref{fig:obs3_sep}. None of the separation constraints are violated and all of the servicers maintain a safe distance between each other and show the capability to avoid the fixed obstacles throughout the RVD towards the target. For further visualization purposes an animation of the RVD simulation was created, and a full video is provided on \url{https://youtu.be/C3gmMvoiY-o}. 

\section{Conclusions \& future work}
A proof-of-concept limited information sharing decentralized control strategy using a 6-DOF MPC for multiple servicers to RVD with a tumbling target has been presented in this paper. The model manages to satisfy all of the constraints presented using an LTI framework which allows for a simple and fast model. The simulation was successful when tested with different arrangements of the fixed obstacles with the same constraint requirements.

Future work includes off-setting the servicers docking ports from their center-of-masses, and also coupling it to the servicers attitude. Implementation of conical constraints, such as line-of-sight constraints, should be investigated in order to test if it is applicable to vision-based estimation. A more realistic approach to the safety boundaries such as using ellipsoidal envelopes for solar panels should be implemented. Furthermore, to assess the robustness of the LTI formulation, uncertainty in the servicers inertia matrix $J$ should be investigated. Overall robustness studies, such as Monte Carlo simulations using different initial conditions and disturbances should be analyzed for further validation.

% \section*{APPENDIX}

% \section*{ACKNOWLEDGMENT}

\bibliographystyle{IEEEtran}
\bibliography{references}

\end{document}